\documentclass{article}
\usepackage[utf8]{inputenc}
\usepackage[margin=0.8in]{geometry}
\usepackage{tabularx}
\usepackage{booktabs}
\usepackage{wrapfig}


\usepackage{ijcai19}
\usepackage{soul}
\usepackage{url}
\usepackage[hidelinks, draft]{hyperref}
\usepackage[utf8]{inputenc}
\usepackage[small]{caption}
\usepackage{graphicx}
\usepackage{amsmath}
\usepackage{booktabs}
\usepackage{algorithm}
\usepackage{algorithmic}
\usepackage{tikz}
\urlstyle{same}

\usepackage{amsthm,mathtools}
\usepackage{tikz}
\usepackage{dsfont}
\usepackage{wrapfig}
\usepackage{bm}

\newtheorem{theorem}{Theorem}
\newtheorem{prop}{Proposition}
\title{Reducing selfish routing inefficiencies using traffic lights}


\author{
	Charlotte Roman$^1$\footnote{Contact Author} \And
	Paolo Turrini$^2$ \\
	\affiliations
	$^1$Department of Mathematics, University of Warwick\\
	$^2$Department of Computer Science, University of Warwick\\
	\emails
	\{c.d.roman, p.turrini\}@warwick.ac.uk
}
\date{}
\begin{document}
\maketitle

\begin{abstract}
   Traffic congestion games abstract away from the costs of junctions in transport networks, yet, in urban environments, these often impact journey times significantly. In this paper we equip congestion games with traffic lights, modelled as junction-based waiting cycles, therefore enabling more realistic route planning strategies. Using the SUMO simulator, we show that our modelling choices coincide with realistic routing behaviours, in particular, that drivers' decisions about route choices are based on the proportion of red light time for their direction of travel. Drawing upon the experimental results, we show that the effects of the notorious Braess' paradox can be avoided in theory and significantly reduced in practice,  by allocating the appropriate traffic light cycles along a transport network.
\end{abstract}

\section{Introduction}
    Congestion games are the standard framework of algorithmic game theory employed to study the equilibria of traffic flows \cite{rough}. They are non-cooperative games of perfect information where self-interested actors choose sets of available resources, e.g., roads, and where the cost of each resource depends on its overall usage. 

    A well-known phenomenon occurring in these games is Braess' paradox \cite{Braess}, i.e., the existence of traffic networks that suffer from the increase of total cost when the cost of an available resource strictly decreases. 

    While Braess' paradox is an important mathematical result, its existence relies on rather constraining modelling assumptions, as congestion games abstract away from a number of important features of real-world road networks. Notably, their cost functions assume no clashes between antagonistic traffic flows which, in the real-world, are typically resolved by interdependent control mechanisms such as traffic lights. 

    Traffic lights are themselves an important object of research in Artificial Intelligence, as understanding their best configuration is paramount for the branch of AI concerned with optimising traffic \cite{Chouhan2018,Laszka2016,Lopez2018,Pol2016}. 
However, their effect on the traffic flow equilibria is yet to be understood.
    
\paragraph{Our contribution.} 
    Here, we formulate a congestion game model equipped with junction-based waiting cycles, encoding traffic lights, and prove that the equilibria of such games exist and are essentially unique. Using SUMO, the state of the art traffic simulation software, we show that the model is built on realistic assumptions in terms of the induced cost function and on realistic predictions of routing, as well. We then show that it is possible to use traffic lights to make a Braess-like network more efficient, noting that inaccurate configurations can lead to suboptimal network routing instead.

\paragraph{Related literature.}
    Congestion games have been a reference framework for transportation research since Wardrop \cite{Wardrop}, who established the conditions for a system equilibrium to exist such that all travellers have minimum and equal costs, leading to further important applications in the traffic domain \cite{Fisk,Pas,Sheffi,Yao2019,Zhao}.

    Following Braess' seminal work \cite{Braess}, the topic of routing paradoxes has been extensively explored \cite{Murchland,Pas,Zhao}. Pas and Principio \cite{Pas} classified demand constraints and linear cost functions that cause Braess' paradox on Braess-like networks. \citeauthor{Milchtaich} \cite{Milchtaich} singled out the topological conditions for an undirected network to be immune to Braess' paradox for a single population. In asymmetric games, conditions for network immunity were proposed by Chen et al. \cite{Chen}.  More generally, Epstein et al. \cite{Epstein} examined topologies in which every Nash equilibrium is socially optimal. Although we are restricting our analysis to perfect information games, there are connections with \cite{Meir2018}, who consider congestion games where players have subjective utilities and proved bounds on the impact bias players can have on the equilibria. 

    The study of real-world traffic networks heavily relies on computer-aided simulation. In this field,    
    Simulation of Urban Mobility (SUMO) \cite{Lopez2018} is the main open source traffic simulation software often used in transportation research \cite{Chouhan2018,Laszka2016,Lin2018,Lopez2018}. SUMO, which is our software of reference, is compatible with developing and testing intelligent traffic lights, and it is frequently used for the implementation and testing of reinforcement learning models \cite{Pol2016,Yang2017}.

\paragraph{Paper structure.}
    Section \ref{sec:preliminaries} introduces the basic results and definitions of congestion games needed later on. Section \ref{sec:traffic-lights} formulates the novel game model and shows that equilibria exist and are essentially unique. In Section \ref{sec:sumo}, we use SUMO simulations to illustrate that the game has realistic properties and assumptions. Finally, in Section \ref{sec:braess} we show how realistic traffic light models can be used to reduce selfish routing inefficiencies. 

\section{Mathematical preliminaries}\label{sec:preliminaries}

    Let $N=\{ 1,...,n\}$  be a nonempty finite set of agent populations, each travelling along a directed graph  $G=(V,E)$ between an origin $o \in V$ and destination $d \in V$. Agents belonging to the same population are assumed to have the same origin-destination (henceforth OD) pair. The demand for a population $i$, i.e., its size, is $d_{i}\geq 0$. Each population $i$ is assumed to control a nonempty finite resource set $E_i \subseteq E$, intuitively the edges they can potentially travel on, given their OD pair. More formally, we assume each $E_i$ to be made by those resources that are used in at least one of the population's potential OD routes, corresponding to strategy sets $S_{i} \subseteq 2^{E_{i}}$. We assume that each $e \in E$, i.e., each edge,  can be written as an ordered pair of nodes of $V$. Moreover, we assume that resource cost functions, i.e., functions of the form $c_e: \mathds{R}_{ \geq 0} \rightarrow \mathds{R}_{ \geq 0} \cup \{ \infty \}$ such that $e \in E$, are continuous, non-decreasing and non-negative. Finally, we define a \textit{(nonatomic) congestion game} as a tuple $\mathcal{M} = (N,(E_i),(S_i), (c_e), (d_i))$, with $i \in N$ and $e \in E$. 
    We stress that even if real-world traffic works with a discrete number of individuals, the analysis can often be significantly simplified, while keeping sufficient generality, using the non-atomic form, which assumes players controlling negligible amount of flow. 
    The outcome of all players of population $i$ choosing strategies leads to a strategy distribution $\bm{x}^{i}$ satisfying $\sum_{s \in S_{i}} x^{i}_s = d_{i}$ and $x^{i}_s \geq 0, \,  \forall s \in S_{i}$. 
	
    Call a strategy distribution or outcome $\bm{x} = (\bm{x}^{i})_{i \in N}$  \emph{feasible} if $\sum_{s \in S_{i} } x^{i}_s = d_{i}, \, \forall i \in N$. 
	Denote moreover the load on $e$ in an outcome $\bm{x}$ to be $f_e(\bm{x}) =\sum_{i \in N} \sum_{s \in S_i} x^i_s\mathbf{1}_s(e)$ where $\mathbf{1}$ is the indicator function. Moreover, in $\bm{x}$, let a player from population $i$ receive a cost function $C(s,\bm{x})=C_{i}(s,\bm{x})=\sum_{e \in s} c_e(f_e(\bm{x}))$ when selecting strategy $s\in S_{i}$. 

    A \textit{user equilibrium} (UE), also known as Wardrop equilibrium, is a strategy distribution $\bm{x}$ such that every player of every population chooses a strategy with minimum cost. More formally, a UE is a strategy distribution $\bm{x}$ such that the following inequality holds true: $\sum_{e \in s_i}c_e(f_e(\bm{x})) \leq \sum_{e \in s'_i} c_e(f_e(\bm{x}))$ for all $s_i, s'_i \in S_i$ such that $\bm{x}^i_{s_i}>0, \, \forall i \in N$.
	
    The \textit{social cost} $SC(\bm{x})$  of $\bm{x}$ is the total cost incurred by all players, i.e., \[ SC(\bm{x}) =\sum_{i \in N} C_i (\bm{x})d_i = \sum_{e \in E}f_e(\bm{x}) c_e(f_e(\bm{x}))\]
    
    Strategy distribution $\bm{x}$ is said to be the \textit{social optimum} (SO) if it solves the following minimisation problem: $\min_{\bm{x}} SC(\bm{x}) \mbox{ such that} \sum_{s_i \in S_{i}} x_{i}^{s_i} = d_{i},\forall i \in N, x_{i}^{s_i} \geq 0$. It is often the case that the SO solution is different to the UE solution, as players want to maximise their own individual utility. The \textit{Price of Anarchy} (PoA) is the ratio between the social cost of an SO outcome and the worst social cost of a UE. For UE $\bm{y}$,  \[PoA = \frac{\arg\min_{\bm{x}} SC(\bm{x})}{\arg \max_{\bm{y}}SC(\bm{y})}\]
    
	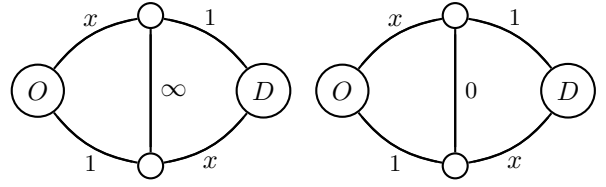
\begin{figure} [t]
		\begin{center}
			\begin{tikzpicture}[shorten >=2pt, thick]
			\node[circle,draw] (A) at (0,0) {$O$};
			\node[circle,draw] (B) at (1.5,1) {};
			\node[circle,draw] (C) at (1.5,-1) {};
			\node[circle,draw] (D) at (3,0) {$D$};
			\draw[-] (A) to[bend left=20] node[above] {$x$} (B);
			\draw[-] (B) to[bend right=20] node[above] {} (A);
			\draw[-] (B) to[bend left=20] node[above] {\small $1$} (D);
			\draw[-] (D) to[bend right=20] node[above] {} (B);
			\draw[-] (A) to[bend right=20] node[below] {\small $1$} (C);
			\draw[-] (C) to[bend left=20] node[below] {} (A);
			\draw[-] (C) to[bend right=20] node[below] {$x$} (D);
			\draw[-] (D) to[bend left=20] node[below] {} (C);
			\draw[-] (B) to node[right] {$\infty$} (C);
			\draw[-] (C) to node[right] {} (B);   
			\end{tikzpicture}\hspace{0.2cm}
			\begin{tikzpicture}[shorten >=2pt, thick]
			\node[circle,draw] (A) at (0,0) {$O$};
			\node[circle,draw] (B) at (1.5,1) {};
			\node[circle,draw] (C) at (1.5,-1) {};
			\node[circle,draw] (D) at (3,0) {$D$};
			\draw[-] (A) to[bend left=20] node[above] {$x$} (B);
			\draw[-] (B) to[bend right=20] node[above] {} (A);
			\draw[-] (B) to[bend left=20] node[above] {\small $1$} (D);
			\draw[-] (D) to[bend right=20] node[above] {} (B);
			\draw[-] (A) to[bend right=20] node[below] {\small $1$} (C);
			\draw[-] (C) to[bend left=20] node[below] {} (A);
			\draw[-] (C) to[bend right=20] node[below] {$x$} (D);
			\draw[-] (D) to[bend left=20] node[below] {} (C);
			\draw[-] (B) to node[right] {\small $0$} (C);
			\draw[-] (C) to node[right] {} (B);
			\end{tikzpicture}
			\caption{Braess' paradox on the Wheatstone network. When $d=1$, the social cost of travel is $\frac{3}{2}$ before and $2$ after reducing the costs of the middle edge.}
			\label{fig:Braess}
		\end{center}
	\end{figure}
    
    Braess' paradox arises when the cost of a resource is strictly decreased yet results in a strict increase in the social cost of the equilibria. This can be observed in the Wheatstone network in Figure \ref{fig:Braess}. A set of systems $(E, S_i)_{i \in N}$ admits \textit{Braess' paradox}  if there are two (nonatomic) congestion games $\mathcal{M}=(N,E,(S_i),(c_e), (d_i))$ and $\mathcal{M}'=(N,E,(S_i),(c'_e), (d'_i))$, with $i \in N$ and $e \in E$, for which $c'_e(t) \leq c_e(t)$, $\forall t \geq 0$ and $d'_i \leq d_i,$ $\forall i \in N$, and two UE $\bm{x}$ and $\bm{x}'$, such that $ SC(\bm{x}) < SC(\bm{x}')$. If no such $\mathcal{M}$ and $\mathcal{M}'$ exist, then we say that $\mathcal{M}$ is \textit{immune} to Braess' paradox. 

	A \textit{network} $G=(V,E)$ is a directed graph with a maximum of one edge between any ordered pair of nodes and no self-loops. For a directed edge $uv \in E$ where $u,v \in V$, we understand that $uv$ starts at $u$ and ends at $v$. The \textit{in-degree} of a node $v \in V$, $v_{in}$, is the number of edges that terminate there. 
	A \textit{path} is an ordered collection of edges such that adjacent pairs of edges share a node.	
	
    
    If a network is \textit{two-terminal} then there is a single origin and destination pair for players to travel between. 

	A two-terminal network is \textit{series-parallel} if it is either a single edge, or it is composed recursively by joining two series-parallel networks in series or in parallel. 
	The following result, used later on, outlines the conditions for a network to be immune to Braess' paradox.
	\begin{theorem}[\cite{Milchtaich}] \label{theorem:milch}
	A two-terminal network $G$ is immune to Braess' paradox if and only if $G$ is series parallel.
	\end{theorem}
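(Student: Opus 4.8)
The plan is to prove the two implications separately, using throughout that in a two-terminal network every user equilibrium assigns the same cost $L$ to all users, so that $SC(\bm{x}) = d\,L$ with $d$ the total demand; existence of an equilibrium and uniqueness of this common value $L$ follow from the standard convex-potential (Beckmann) argument for nonatomic congestion games.

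For the \emph{if} direction, that is, series-parallel implies immune, I would argue by structural induction on the series-parallel decomposition of $G$, proving the stronger statement that the equilibrium cost $\Lambda_G(d)$, viewed as a function of the demand $d \ge 0$ and of the tuple $(c_e)$ of edge cost functions, is (i) single-valued, (ii) non-decreasing in $d$, and (iii) non-decreasing under any pointwise increase of some $c_e$. The base case is a single edge $e$, where $\Lambda_G(d) = c_e(d)$ and (i)--(iii) are immediate. For a series composition $G = G_1 \cdot G_2$, all flow traverses both parts, so every equilibrium of $G$ restricts to equilibria of $G_1$ and of $G_2$ each carrying demand $d$, and $\Lambda_G(d) = \Lambda_{G_1}(d) + \Lambda_{G_2}(d)$; since the edge sets of $G_1$ and $G_2$ are disjoint, (i)--(iii) pass to $G$. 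For a parallel composition $G = G_1 \parallel G_2$, an equilibrium splits the demand as $d = d_1 + d_2$ with $\Lambda_{G_j}(d_j) = L$ on each branch that carries flow and $\Lambda_{G_j}(0) \ge L$ on a branch that does not; writing $\rho_j$ for the generalised inverse of the non-decreasing function $\Lambda_{G_j}$, the value $L = \Lambda_G(d)$ is pinned down by $d = \rho_1(L) + \rho_2(L)$, and since $\rho_1 + \rho_2$ is non-decreasing in $L$ and anti-monotone under a pointwise increase of any $c_e$, properties (ii) and (iii) transfer to $G$. Given (i)--(iii), immunity is immediate: if $c'_e \le c_e$ for all $e$ and $d' \le d$, then $\Lambda_G(d', c') \le \Lambda_G(d', c) \le \Lambda_G(d, c)$, hence $SC(\bm{x}') = d'\,\Lambda_G(d', c') \le d\,\Lambda_G(d, c) = SC(\bm{x})$, so no Braess-inducing pair of games exists.

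For the \emph{only if} direction I would prove the contrapositive: a network that is not series-parallel admits Braess' paradox. Here I would invoke the classical topological characterisation of series-parallel two-terminal networks (Duffin): $G$ fails to be series-parallel precisely when it contains a subnetwork homeomorphic to the Wheatstone network $W$ of Figure~\ref{fig:Braess} (i.e.\ $W$ with some edges subdivided), with the terminals $O$ and $D$ of $G$ in the roles of the source and sink of $W$. Given such an embedded subdivision, whose five $W$-edges correspond to internally disjoint paths in $G$, I would transplant the classical Braess instance of Figure~\ref{fig:Braess}: on each of those paths put the cost of the corresponding $W$-edge ($x$, $1$, or the middle edge's cost) on one designated edge and cost $0$ on all its other edges, and assign cost $\infty$ to every edge of $G$ lying outside the subdivision. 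Fixing $d = d' = 1$, let $\mathcal{M}$ be this instance with the middle path's cost set to $\infty$ and $\mathcal{M}'$ the same instance with the middle path's cost set to $0$; the systems $(E,(S_i))$ then coincide and $c'_e \le c_e$ for every $e$. Since any $O$--$D$ route of finite cost must stay inside the subdivision, the equilibria of $\mathcal{M}$ and $\mathcal{M}'$ reproduce exactly the two situations of Figure~\ref{fig:Braess}, so $SC(\bm{x}) = 3/2 < 2 = SC(\bm{x}')$, witnessing Braess' paradox.

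The step I expect to be the main obstacle is the parallel-composition case of the induction: one has to treat carefully the degenerate situations — branches carrying no flow, flat segments and $\infty$-values of the functions $\Lambda_{G_j}$, and the attendant non-uniqueness of equilibrium \emph{flows} — in order to make the generalised-inverse argument precise and to confirm that the equilibrium \emph{cost} is a single well-defined number about which the monotonicity claims can even be stated. In the \emph{only if} direction the only non-routine ingredient is Duffin's structural characterisation, which I would cite rather than reprove; checking that the transplanted instance is a genuine user equilibrium — no user can gain by leaving the subdivision, thanks to the $\infty$-costs elsewhere — and that it realises the strict social-cost gap is then a direct adaptation of the Wheatstone computation underlying Figure~\ref{fig:Braess}.
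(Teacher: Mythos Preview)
The paper does not prove this theorem at all: it is quoted as a known result of Milchtaich and stated without argument, so there is no ``paper's own proof'' to compare against. Your proposal is therefore not competing with anything in the text; it is an independent attempt to supply what the paper merely cites.

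As a sketch of Milchtaich's argument your outline is broadly on the right track. The \emph{if} direction via structural induction on the series-parallel decomposition, establishing monotonicity of the equilibrium cost $\Lambda_G$ in both demand and edge costs, is exactly the engine of the original proof; the series case is trivial and the parallel case is indeed where the work lies, for the reasons you flag (flat segments of $\Lambda_{G_j}$, branches carrying zero flow, infinite values). For the \emph{only if} direction, the reduction to an embedded Wheatstone bridge and the transplant of the classical Braess instance with $\infty$-costs off the subdivision is also the standard route. Two points deserve more care than you give them. First, Milchtaich's theorem is stated for \emph{undirected} two-terminal networks, and the forbidden-minor characterisation you invoke must be the one appropriate to that setting: every non-series-parallel undirected two-terminal network contains the Wheatstone network as a topological minor \emph{with the terminals of $G$ mapped to $O$ and $D$}; Duffin's original result concerns $K_4$-minors without terminal constraints, so you should be explicit about which version you are citing. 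Second, in the parallel step the map $L \mapsto \rho_1(L)+\rho_2(L)$ need not be strictly increasing or even single-valued when the $\Lambda_{G_j}$ have plateaus, so ``pinned down'' overstates things; what one actually shows is that the \emph{set} of equilibrium costs is a singleton, which requires a short separate argument (e.g.\ via the Beckmann potential) rather than falling out of the inverse construction.
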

	
\section{Congestion games with traffic lights}\label{sec:traffic-lights}

    In networks where traffic lights exist at junctions, the cost of waiting can be intuitively divided in two parts: the cost of using the preceding edge if there was no traffic light at its end and the additional cost that the traffic light imposes on the driver. This section deals with the problem of establishing the exact form of such division.
    

    Formally, we decompose the expected travel time of an edge $c_e$, into the expected travel time $\bar{c}_e$ that would occur from traveling across $e$, and the expected waiting time $w_e$ that occurs at its end. The cost function for any edge $e$ can therefore be written as $c_e = \bar{c}_e+w_e$. Note that an edge which does not end in a traffic light has $w_e =0$.
	
	A traffic light cycle is formed of repeated phases of green and red light when traffic is either serviced or not: each direction of travel has a cycle of $t^e_r$ red seconds and $t^e_g$ green seconds. Furthermore, between the phase changes there is a period of amber light to warn drivers of the change. We assume that the amber cycles have a constant time (in our simulation we set this to 3 seconds). The amber light at the end of a green cycle is included in $t^e_g$ and the amber cycle at the end of a red cycle is included in $t_r$. At the end of every edge there exists a probability $p_e = \frac{t^e_r}{t^e_r+t^e_g} \in [0,1 \rbrack$ that there will be a red traffic light when this node is reached. The waiting time at a traffic light is a function of $p_e$ as well as congestion $x$.

    In general, the cost of waiting at a traffic light is also dependent on congestion. Here, however, we make the key assumption that we can separate the effects of congestion at a traffic light from that of normal travel. Technically, we write the cost function of using an edge as $c_e(x,p_e) = \bar{c}_e(x) + w_e(x,p_e)$.  Furthermore, we make the following simplifying assumptions.
    We assume driving congestion functions $\bar{c}_e: \mathds{R}_{ \geq 0} \rightarrow \mathds{R}_{ \geq 0} \cup \{ \infty \}$ to be continuous, non-decreasing and non-negative, while traffic light waiting functions $w_e: \mathds{R}_{ \geq 0} \times [0,1] \rightarrow \mathds{R}_{ \geq 0} \cup \{ \infty \}$ to be continuous, non-negative and non-decreasing in $x$. Moreover, we assume that if $p_e=0$, then $w_e(x,0)=0$ and if $p_e \in (0,1]$ then $w_e(x,p_e)>0$. 

    As will be clear from the experiments in Section \ref{sec:sumo}, our assumptions are not only technically desirable but also justified empirically. In fact they isolate the natural cost functions for congestion games with control mechanisms as emerging from the microscopic traffic simulation.

    Given all the above, we define a \textit{traffic light congestion game} as a tuple $\mathcal{M} = (N,(E_{i}),(S_{i}), (p_e),(c_e), (d_{i}))$ where $e \in E,\, i \in N$. A \textit{traffic light user equilibrium} (TLUE) is a strategy distribution $\bm{x}$ such that all players choose a strategy they expect to be minimum cost: $\forall i \in N$ and $s,s' \in S_i$ such that $x^i_{s}>0$, we have $C_i(s,\bm{x}) \leq C_i(s',\bm{x})$. Note that a UE is a special case of a TLUE where the traffic lights are always green upon arrival. 

As standard with the perfect information treatment of congestion games, let us now assume everyone is aware of the probabilities $p_e$ --- this can for example be learned from past experience or through intelligent vehicles. Then a strategy distribution $\bm{x}$ is a TLUE if and only if $\forall s \in S_i$ such that $x^i_{s}> 0$:
	\[  C_i(\bm{x})= \min_{s \in S_i} C_i(s,\bm{x}) \]

    The following result follows.	
	\begin{prop} \label{prop:ue}
        Feasible strategy distribution $\bm{x}$ is a TLUE solution if, and only if, for any feasible $\bm{\tilde{x}}$: \[ C(\bm{x})(\bm{x}-\bm{\tilde{x}}) \leq 0 \]
    \end{prop}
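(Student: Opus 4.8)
The plan is to recognise the stated inequality as the standard variational-inequality characterisation of a Wardrop-type equilibrium, reading $C(\bm{x})$ as the vector of strategy costs $(C_i(s,\bm{x}))_{i\in N,\,s\in S_i}$ evaluated at the \emph{fixed} distribution $\bm{x}$, so that $C(\bm{x})(\bm{x}-\bm{\tilde{x}}) = \sum_{i\in N}\sum_{s\in S_i} C_i(s,\bm{x})\,(x^i_s-\tilde{x}^i_s)$. The crucial structural remark is that, since the costs are frozen at $\bm{x}$, this bilinear form decouples across populations: it suffices to show for each $i$ separately that $\sum_{s\in S_i} C_i(s,\bm{x})(x^i_s-\tilde{x}^i_s)\le 0$ for every $\bm{\tilde{x}}^i$ feasible for population $i$, and then sum over $i$.

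For the forward direction I would fix a population $i$, set $\lambda_i=\min_{s\in S_i}C_i(s,\bm{x})$, and use the TLUE condition stated just above the proposition, namely $x^i_s>0\Rightarrow C_i(s,\bm{x})=\lambda_i$. This gives $\sum_{s}C_i(s,\bm{x})x^i_s=\lambda_i\sum_s x^i_s=\lambda_i d_i$, while $C_i(s,\bm{x})\ge\lambda_i$ and $\tilde{x}^i_s\ge 0$ together with $\sum_s\tilde{x}^i_s=d_i$ give $\sum_s C_i(s,\bm{x})\tilde{x}^i_s\ge\lambda_i d_i$; subtracting yields the per-population inequality, and summing yields $C(\bm{x})(\bm{x}-\bm{\tilde{x}})\le 0$.

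For the converse I would argue by contraposition. If $\bm{x}$ is feasible but not a TLUE, there are $i$ and $s,s'\in S_i$ with $x^i_{s}>0$ and $C_i(s,\bm{x})>C_i(s',\bm{x})$. I construct $\bm{\tilde{x}}$ by shifting a mass $\varepsilon\in(0,x^i_s]$ from $s$ to $s'$ within population $i$ and leaving every other component of $\bm{x}$ unchanged; this $\bm{\tilde{x}}$ is feasible, all terms of $C(\bm{x})(\bm{x}-\bm{\tilde{x}})$ cancel except those indexed by $s$ and $s'$, and what remains is $\varepsilon\,(C_i(s,\bm{x})-C_i(s',\bm{x}))>0$, contradicting the inequality.

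I do not anticipate a real obstacle; the proof is the textbook Wardrop/variational-inequality argument. The two points deserving explicit care are: (i) stating clearly that the costs in $C(\bm{x})(\bm{x}-\bm{\tilde{x}})$ are evaluated at $\bm{x}$ rather than at $\bm{\tilde{x}}$ — this is precisely what upgrades the statement from a mere necessary condition to an equivalence; and (ii) the degenerate case $\lambda_i=\infty$, i.e. a population all of whose routes have infinite cost, where the arithmetic with $\lambda_i d_i$ must be read with the convention that such an $\bm{x}$ is still (vacuously) an equilibrium for that population and the inequality continues to hold, or else excluded by assuming at least one finite-cost route per population.
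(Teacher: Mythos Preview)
Your proof is correct and follows the same variational-inequality approach as the paper: both directions argue that at a TLUE every used strategy attains the population's minimum cost, so any feasible reallocation cannot lower the linear form $C(\bm{x})(\cdot)$, and conversely a profitable single-population deviation witnesses failure of the inequality when $\bm{x}$ is not a TLUE. Your write-up is in fact tighter than the paper's, which tacitly treats each population as concentrated on a single strategy and whose converse paragraph blurs $C(\bm{x})$ with $C(\bm{\tilde{x}})$; your per-population $\lambda_i$ computation and explicit $\varepsilon$-shift avoid both issues.
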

    \begin{proof}
    	Suppose that $\bm{x}$ is a TLUE. Let population $i$ play strategy $s$ in $\bm{x}$: $\bm{x}^i_{s}=d_i$. Then any strategy $s' \in S_i$ that has a higher cost than $C_i(s,\bm{x})$ does not occur in a TLUE.
    	\[  C_i(s',\bm{x}_{\sim i}) > C_i(s,\bm{x}_{\sim i}) \Rightarrow \bm{x}^i_{s'} = 0 \]
    	Any feasible $\bm{\tilde{x}}$ can be written as a deviation from $\bm{x}$ by a set of population $M \subseteq N$: $\bm{\tilde{x}}=(s_1,s_2,...,s_m,\bm{x}_{\sim M})$.  
    	Hence, any other feasible flow has a route cost at least as high as $\bm{x}$. 
    	\[ \bm{\tilde{x}} C(\bm{x})  \geq  \bm{x} C(\bm{x}) \, \, \, \,  \forall \bm{\tilde{x}} \]
    	Now suppose that the converse is true, that there exists a feasible $\bm{\tilde{x}}$ with cost less than that of $\bm{x}$: \[ \exists \bm{x}' > 0, \, \,  C(\bm{\tilde{x}}) < C(\bm{x}).\] 
    	Then there exists a set of populations $M$ whose deviations create this distribution: $\bm{\tilde{x}} = (s_1,...,s_m,\bm{x}_{\sim M})$.
    	If those populations reroute their flow along these cheaper routes then it will reduce the total cost by $C(\bm{x})\bm{x} - C(\bm{\tilde{x}})\bm{\tilde{x}} > 0$. Thus, $ \bm{\tilde{x}} C(\bm{x}) < \bm{x} C(\bm{x}) $. So the inequality does not hold if $\bm{x}$ is not a TLUE. Hence, the statements are equivalent.
    \end{proof}
 
    From Proposition \ref{prop:ue} we can readily derive that for any TLUE solution $\bm{x}$ and any feasible $\bm{\tilde{x}}$:
	\[ C(\bm{x})(\tilde{\bm{x}}-\bm{x}) \geq 0 \]

    Now we can write the following derivation:
	\[ \sum_{i \in N} \sum_{s \in S_i} C_i(s,\bm{x})(\tilde{\bm{x}}^i_{s}-\bm{x}^i_{s}) \geq 0  \] 
	\[ \sum_{i \in N} \sum_{s \in S_i} \sum_{e \in s} c_e(f_e(\bm{x}))(\tilde{\bm{x}}^i_{s}-\bm{x}^i_{s}) \geq 0  \]
	\[ \sum_{i \in N} \sum_{s \in S_i} \sum_{e \in s} [\bar{c}_e(f_e(\bm{x}))+w_e(f_e(\bm{x}),p_e)](\tilde{\bm{x}}^i_{s}-\bm{x}^i_{s}) \geq 0  \]
	\[ \sum_{e \in E} \sum_{i \in N} \sum_{s \in S_i} [\bar{c}_e(f_e(\bm{x}))+w_e(f_e(\bm{x}),p_e)](\tilde{\bm{x}}^i_{s}-\bm{x}^i_{s}) \geq 0  \]
	\[ \sum_{e \in E} [\bar{c}_e(f_e(\bm{x}))+w_e(f_e(\bm{x}),p_e)](f_e(\tilde{\bm{x}})-f_e(\bm{x}^i_{s})) \geq 0 \]
	This is equivalent to
    \[ \min_{\bm{x}} \sum_{e \in E} \int_0^{f_e(\bm{x})}\bar{c}_e(\bm{z})+w_e(f_e(\bm{x}),p_e) dz.\] 
    This means that if cost functions are monotonic and differentiable, as they are in our case, a solution exists and is essentially unique, i.e., all TLUE solutions have the same social cost.

\section{Experimental evidence} \label{sec:sumo}
    \begin{figure*}[t]
        \centering
        \includegraphics[width=.46\textwidth, scale=0.7]{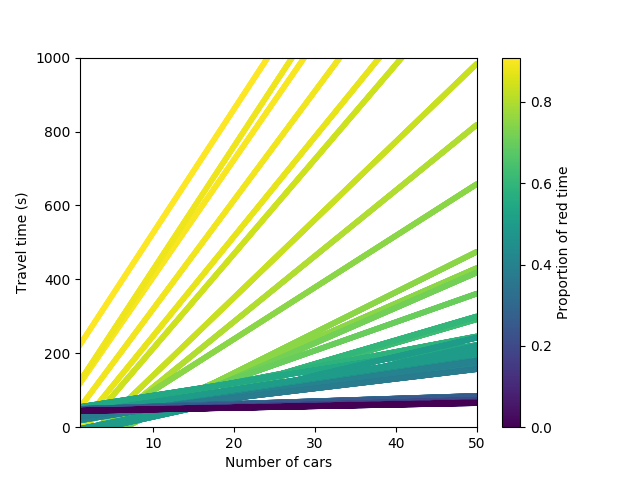}
        \includegraphics[width=.5\textwidth, scale=0.7]{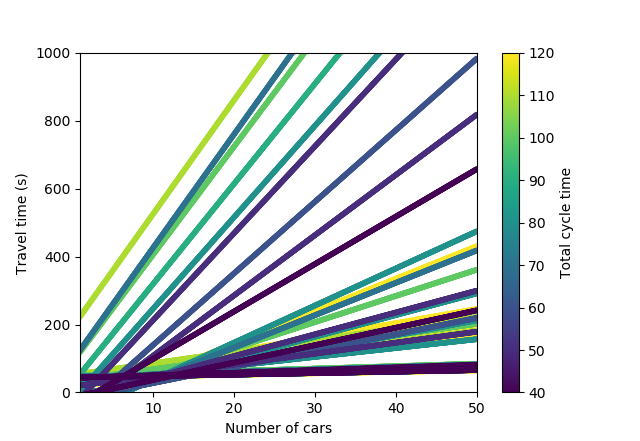}
          \caption{Simulated affine cost functions are the expected journey time found from regressions of simulated data points with a variety of values of $p \in [0,1]$ and $T\in [40,120]$. There is a clear correlation between the proportion of red light in a cycle $p$ and the expected travel times (left side). The length of the cycle time $T$ showed no correlation to journey times (right side). }
          \label{fig:costfunctions}
    \end{figure*}
    \begin{figure*}[th!]
        \centering
        \includegraphics[width=.45\textwidth]{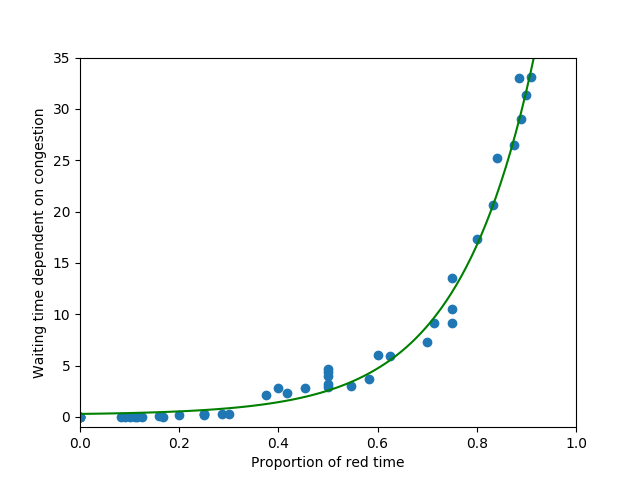}      \includegraphics[width=.45\textwidth]{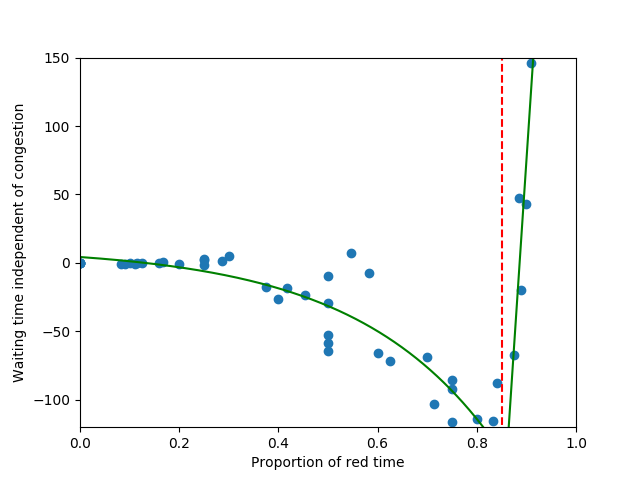}
        \caption{Waiting time dependent (left side) and independent (right side) of edge congestion as a function of $p$. The waiting time dependent of congestion is exponential as a function of the proportion of red time. The waiting time independent of congestion was not continuous on $p \in [0,1]$. For values of $p \in [0,0.85]$ the data fits an exponential function.}
        \label{fig:wp}
    \end{figure*}
    To verify the correctness of our assumptions over the cost functions in real-world cases, we have set up a SUMO simulation reproducing traffic journey over two horizontal edges with a junction between them, where the first edge was 500m long and the second edge was 50m long.
  

    The cost of the journey is taken as the total travel time and the congestion, $\bm{x}$, on an edge is the average number of cars that are present during the time a car is on the edge. Each simulation ran for 20000 timesteps with the same arrival distribution of cars. 

    Firstly, we had to find a compatible cost function for the journey that would occur if there was no traffic light at the junction. The affine cost function determined from the simulation was $\bar{c}(\bm{x}) = 0.4 \bm{x} + 43.9$. 
    
     The following tests were conducted to analyse the relationship between the waiting time $w_e$ and the traffic light cycle $\{t^e_r, t^e_g\}$. In each simulation there is fixed 3 second amber cycle to allow cars breaking time before a phase change. As discussed, if the amber light is after a green light then this is added to the green cycle time, otherwise, it is added to the red cycle time. For example, a cycle that is red for 17 seconds, then amber for 3 seconds, green for 17 seconds and finally amber for 3 seconds would be written as $t^e_r=20, t^e_g=20$. Any fixed traffic light cycle can be described through the total cycle time $T_e=t^e_r+t^e_g$ and the proportion of red time in a cycle $p_e=\frac{t^e_r}{t^e_r+t^e_g}$. We assume that $T_e$ belongs within some bounds to avoid scenarios where no car is allowed through at a traffic light, restricting the waiting time function to finite values. We compared the effects of changing these two variables on the cost functions and waiting times where the cost function of the journey can be written in the following form:
     \[ c(\bm{x}) = \bar{c}(\bm{x}) + w(\bm{x},T,p). \]
    
    Figure \ref{fig:costfunctions} plots the cost functions for a number of simulations. Each line represents the cost function found from linear regression of the data points. Lines are coloured using the $p$ values to indicate the correlation between waiting time and red cycle proportion. 

    From the simulations without a traffic light, we have $\bar{c}(x) =  0.4x + 44$. Once a traffic light was added, we collected the simulated data and used linear regression to find the waiting time functions. For example, for the cycle 20 seconds red and 20 seconds green we have $T=40$ $p=\frac{1}{2}$ and the simulation results were $c(x) = 4.4x - 9 =  \bar{c}(x)+ 4x - 53$. Hence, the waiting time coefficient dependent on congestion was $4$, and the time independent of congestion was $-53$.
    
    The waiting time coefficients were divided into either dependent or independent of congestion. The time dependent on congestion, shown in Figure \ref{fig:wp} (left side), follows an exponential distribution whereas the time independent of congestion, shown in Figure \ref{fig:wp} (right side), only follows an exponential distribution until a threshold value of red proportion and is then linear. This may be a result of acceleration constraining movement. After a threshold of 85\% red time, the waiting time increases significantly. This, observe, makes the cost function not continuous over $p \in [0,1]$, in general. However, it is safe to assume that in real traffic light systems none will have a cycle with more than 85\% red time. So, we have that for every $e \in E$ where there exists a traffic light, the proportion of red time is bounded $p_e \leq p^+ < 1$ for models with finite waiting time. Finally, we remark that there was no evidence to suggest a correlation between waiting time and $T$ (see Figure \ref{fig:costfunctions}).
    

    For our simulated network, the expected cost of the journey can be written as a function that is continuous and nondecreasing in $x$, as required.
    \[  c(\bm{x}) =  \bar{c}(\bm{x})+ (0.12e^{6.12p}+0.16)\bm{x}-10.51e^{3.05p}+14.51\]
    where $\bar{c}(\bm{x}) =  0.4\bm{x} + 44$ and $p<0.85$.


\section{Braess' paradox and traffic lights}\label{sec:braess}

    It is clear that changing traffic light cycles has an impact on the routing choices of drivers. In this section, we show that `biased' traffic lights can force optimal or suboptimal solutions.  
    To see this suppose that there exists a traffic light at a node $v$ if there are at least two edges entering $v$ and consider the directed Wheatstone graph with a traffic light as in Figure \ref{fig:trafficlightbraess}. We denote the presence of a traffic light by a square node. We can write the cost functions for the edges as follows: 
    \begin{align*}
    c_{e_1}(x) = & \bar{c}_{e_1}(x) & c_{e_4}(x) = &\bar{c}_{e_4}(x) \\
    c_{e_2}(x) = &\bar{c}_{e_2}(x)+w_{e_2}(x,p) &  c_{e_5}(x) = &\bar{c}_{e_5}(x)\\
    c_{e_3}(x) = &\bar{c}_{e_3}(x)+w_{e_3}(x,1-p)
    \end{align*}
  Assume moreover a TLUE $\bm{x}$ where $f_{e_1}(\bm{x}) = x$, $f_{e_2}(\bm{x}) = d-x$, $f_{e_3}(\bm{x}) = x-y$, $f_{e_4}(\bm{x}) = y$ and $f_{e_5}(\bm{x}) = d-y$. 
	\begin{figure} [t]
	    \centering
		\begin{tikzpicture}[shorten >=2pt, thick]
		\node[circle,draw] (A) at (-1.4,0) {$O$};
		\node[circle,draw] (B) at (1.5,1) {};
		\node[rectangle,draw] (C) at (1.5,-1) {};
		\node[circle,draw] (D) at (4.4,0) {$D$};
		\draw[->] (A) to[bend left=20] node[above] {$x$} (B);
		\draw[-] (B) to[bend right=20] node[above] {} (A);
		\draw[->] (B) to[bend left=20] node[above] {\small $1$} (D);
		\draw[-] (D) to[bend right=20] node[above] {} (B);
		\draw[->] (A) to[bend right=20] node[below] {\small $1+w(d-x,p)$} (C);
		\draw[-] (C) to[bend left=20] node[below] {} (A);
		\draw[->] (C) to[bend right=20] node[below] {$d-y$} (D);
		\draw[-] (D) to[bend left=20] node[below] {} (C);
		\draw[->] (B) to node[right] {$w(x-y,1-p)$} (C);
		\draw[-] (C) to node[right] {} (B);   
		\end{tikzpicture}
		\caption{The Wheatstone network with a single traffic light.}
		\label{fig:trafficlightbraess}
	\end{figure}
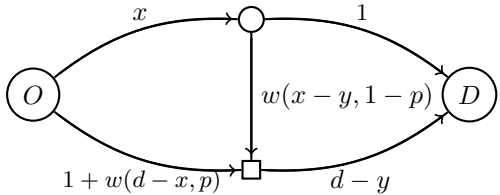

    If we increase $p$ it causes $w_{e_2}$ to decrease, therefore $d-x$ will increase. In addition, $w_{e_3}$ increases, hence, $x-y$ decreases. By increasing $p$ to a large value we can increase the costs of using the central edge to make its use undesirable to drivers. Suppose that $w(x,1)=\infty$. Then combining large $p$ with the congestion costs from the Braess example (as in Figure \ref{fig:trafficlightbraess}) limits players to the subgame where the social optimum solution is the TLUE. 
    
    Contrastingly, if we reduce $p$ then $w_{e_3}$ decreases which causes $x-y$ to increase. Furthermore, $w_{e_2}$ increases, hence, $d-x$ decreases. For very low values of $p$ combined with the Braess example cost functions, the TLUE here is for the players to all use the strategy $\{e_1,e_3,e_5\}$. 
    
    By adding only one traffic light into the Wheatstone network, we have shown that changing the proportions of red and green time of the traffic light can induce routing inefficiencies, or indeed, reduce the effects of Braess' paradox can be caused by traffic lights. Consequently, the traffic light cycles can be used either to force more socially optimal outcomes or to enable poor routing choices. This example shows the importance of traffic light cycles in urban areas as a tool to reduce congestion.
     
    We know from Theorem \ref{theorem:milch} that a two-terminal network is subject to Braess' paradox if it is not series-parallel. In order to reduce the effects of Braess' paradox, we must thus enforce the use of a subnetwork which is unaffected by Braess' paradox. In turn, in order to do that, we must find a set of ``undesirable'' edges to ``remove" so that the new network does not suffer from Braess' paradox. There are multiple ways of defining such a set, and we wish to determine the one of minimal size such that it is possible to deter the use of the edges through traffic lights.
    \begin{prop} \label{prop:2}
        For any two-terminal network that suffers from Braess' paradox, there exists a set of traffic light cycles $(p_e)_{e \in E}$ which can be combined with the network to remove Braess' paradox.
    \end{prop}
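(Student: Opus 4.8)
The plan is to choose the cycles so that a well-chosen set $F\subseteq E$ of edges is held permanently red, turning the effective network series-parallel, and then to invoke Theorem~\ref{theorem:milch}. The starting point is the assumption $w_e(x,1)=\infty$: any edge with $p_e=1$ has $c_e\equiv\infty$ and so carries no flow in any outcome of finite social cost, in particular in any TLUE. Hence, fixing some $F$, setting $p_e=1$ for $e\in F$ and admissible finite cycles elsewhere, the traffic-light game behaves exactly like the ordinary congestion game on the subnetwork $G-F$ with the inherited cost functions, no matter how the remaining finite costs are later decreased. It therefore suffices to produce $F$ such that (a) $G-F$ is series-parallel, (b) $G-F$ still contains a path from $O$ to $D$, and (c) every edge of $F$ ends at a node of in-degree at least $2$, so that a traffic light at that node is admissible in our model.

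To build $F$, first delete from consideration all edges lying on no $O$-$D$ path: these belong to no $S_i$ and so are irrelevant to the game, and we may work inside $G'$, the union of all $O$-$D$ paths. Since $\mathcal{M}$ suffers from Braess' paradox, $G$ (and hence $G'$) is not series-parallel by Theorem~\ref{theorem:milch}, so by the classical forbidden-subgraph characterisation of two-terminal series-parallel networks (Duffin) $G'$ contains a subnetwork homeomorphic to the Wheatstone network. Pick an edge $e$ on the subdivided diagonal of that copy whose head is one of its two hub vertices: two internally disjoint sub-paths re-merge there, so that head has in-degree at least $2$ in the current subnetwork and therefore in $G$, and $G'-e$ still contains an $O$-$D$ path (reroute along the rest of the Wheatstone copy). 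Put $e$ into $F$, delete it, and repeat on what remains. Each step strictly reduces the number of edges, so the process halts, necessarily at a series-parallel network, and the accumulated $F$ meets (a)--(c). With the cycles chosen as above, the game becomes a congestion game on the series-parallel network $G-F$, which is immune to Braess' paradox by Theorem~\ref{theorem:milch}; this proves the claim.

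The genuine obstacle is the graph-theoretic core of the middle step: showing that at every stage one can pick the switched-off edge to be \emph{simultaneously} on the diagonal of a Wheatstone copy with its head at a hub vertex (so a traffic light there is admissible) \emph{and} inessential for $O$-$D$ connectivity, uniformly over possibly cyclic directed networks. This requires the directed form of the forbidden-subgraph characterisation together with a careful argument that the (subdivided) diagonal of a Wheatstone subdivision is never an $O$-$D$ bridge of $G'$; once that is granted, the remainder is bookkeeping and a direct appeal to Theorem~\ref{theorem:milch}.
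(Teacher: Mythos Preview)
Your proposal is correct and follows the same strategy as the paper: pick a set of edges whose removal leaves a series-parallel two-terminal network, set $p_e=1$ on those edges so that $w_e\equiv\infty$ forces them out of every finite-cost outcome, and then invoke Theorem~\ref{theorem:milch}.

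The only real difference is how the edge set is chosen. The paper does it in one stroke: take a maximal series-parallel subgraph $\hat{G}=(V,E\setminus\hat{E})$, set $p_{\hat{e}}=1$ for all $\hat{e}\in\hat{E}$, and assert in a single parenthetical that every such $\hat{e}$ must terminate at a node of in-degree at least two (so a traffic light is present). You instead build $F$ iteratively, using Duffin's forbidden-subgraph characterisation to locate a Wheatstone subdivision at each stage and strip off its diagonal. What your route buys is rigour on exactly the points the paper glosses over---you explicitly maintain $O$--$D$ connectivity after each deletion and you give an argument (the two re-merging sub-paths at the hub) for why the head of the deleted edge has in-degree $\geq 2$. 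The paper's ``maximal series-parallel subgraph'' step hides these same issues without addressing them. Conversely, the paper's version is shorter and avoids the directed-Duffin and non-bridge subtleties you correctly flag as the genuine obstacle in your construction; it simply declares the existence of $\hat{G}$ and moves on.
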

    \begin{proof}
        Find the maximal subgraph of $G$ which is series parallel, $\hat{G}=(V,E\backslash \hat{E})$. For any edge $\hat{e} \in \hat{E}$ there must exist a traffic light. (This is true since in order for the edge not to be added in series or in parallel, it must be formed from existing nodes. The in-degree of the node must already be one, so by adding a non-series parallel edge we must have a traffic light at the end node.) 
        For any such $\hat{e}$, set $p_{\hat{e}}=1$. For any edge $e \in E\backslash \hat{E}$, set $p_e= \frac{1}{v_{in}-v_{\hat{e}}}$ for $v_{in}-v_{\hat{e}}>1$ and $p_e=0$ otherwise, where $v_{\hat{e}}$ is the number of edges that end at $v$ that belong to $\hat{E}$. 
        By construction, the network $\hat{G}$ is series-parallel, hence, is immune to Braess' paradox.
    \end{proof} 
    Proposition \ref{prop:2} shows that any graph can be made immune to Braess' paradox by essentially closing roads by enforcing infinite waiting times on edges which tempt players to choose suboptimal routes.
     However, this result is highly theoretical. In Section \ref{sec:sumo}, we have shown that in real-world cases there are restrictions on the maximum and minimum possible values of $p$ due to the constraints of finite waiting times. Therefore, in real-life scenarios we may not be able to achieve immunity from Braess' paradox through traffic light bias, only reduce its impact. We now suggest that this improvement is restricted by the maximal in-degree of a node. To see this, consider the network in Figure \ref{fig:counterexample}.
        
	\begin{figure}[t]
	    \centering
		\begin{tikzpicture}[shorten >=2pt, thick]
		\node[circle,draw] (A) at (-1.4,0) {$O$};
		\node[circle,draw] (B1) at (-0.4,0.7) {};
		\node[circle,draw] (B2) at (0.5,0.9) {};
		\node[circle,draw] (B) at (1.5,1) {};
		\node[circle,draw] (B3) at (2.5,0.9) {};
		\node[rectangle,draw] (C) at (1.5,-1) {};
		\node[circle,draw] (D) at (4.4,0) {$D$};
		\draw[->] (A) to[bend left=20] node[above] {} (B1);
		\draw[-] (B1) to[bend right=20] node[above] {} (A);
		\draw[->] (B1) to[bend right=0] node[above] {} (B2);
		\draw[-] (B2) to[bend right=0] node[above] {} (B1);
		\draw[->] (B2) to[bend right=0] node[above] {} (B);
		\draw[-] (B) to[bend right=0] node[above] {} (B2);
		\draw[dashed,->] (B1) to[bend left=0] node[right] { } (C);
		\draw[dashed,->] (B2) to[bend left=0] node[right] { } (C);
		\draw[dashed,->] (B) to[bend left=0] node[right] { } (C);
		\draw[dashed,->] (B3) to[bend left=0] node[right] { } (C);
		\draw[->] (B) to[bend left=0] node[above] {} (B3);
		\draw[-] (B3) to[bend right=0] node[above] {} (B);
		\draw[->] (B3) to[bend left=10] node[above] {} (D);
		\draw[-] (D) to[bend right=10] node[above] {} (B3);
		\draw[->] (A) to[bend right=20] node[below] {} (C);
		\draw[-] (C) to[bend left=20] node[below] {} (A);
		\draw[->] (C) to[bend right=20] node[below] {} (D);
		\draw[-] (D) to[bend left=20] node[below] {} (C);
		\end{tikzpicture}
		\caption{In this network, the dashed lines need to be removed in order for immunity to Braess' paradox. The optimal allocation of $p_{\hat{e}}\leq p^+$ does not have immunity to Braess' paradox.}
		\label{fig:counterexample}
	\end{figure}
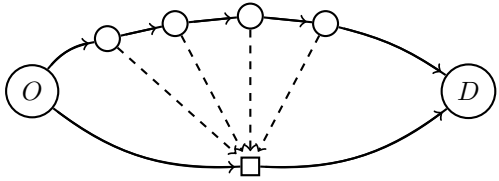
    
    In order to quantify the effects of implementing a traffic light, let us calculate the impact traffic light cycles have on altering the social cost of the equilibrium. Consider the cost functions from Figure \ref{fig:trafficlightbraess} with $d=1$. To find the socially optimum solution we must solve the following minimisation problem:
    \begin{align*}
    \min_{x,y} & [x^2 + (1-x) +(1-x)w(1-x,p) \\  & +(x-y)w(x-y,1-p)+y+(1-y)^2 ] 
    \end{align*}
    where $x \in [0,1]$ and $y \leq x$.
    Suppose the waiting functions are of the simplest exponential form: $w(x,p)=x(e^p-1)$. Note that the waiting time here is always finite, i.e. realistic. We use the Lagrangian method to solve for optimal $x$ and $y$. 
    \begin{align*}
         L(x,y,\lambda)= & x^2 +(x-y)^2(e^{1-p}-1)+(1-x)^2(e^p-1) \\
                         & +y+(1-y)^2 -\lambda(y-x) 
     \end{align*}
     \begin{align*}
        \frac{dL}{dx}= & 2x +2(x-y)(e^{1-p}-1)-2(1-x)(e^p-1)-\lambda = 0  \\
        \frac{dL}{dy}= & -2(x-y)(e^{1-p}-1)+1 +2(1-y) +\lambda = 0 \\ 
        \frac{dL}{d\lambda}= & y-x \leq 0 \qquad \lambda(y-x)=0 
    \end{align*}
    
    If $x \neq y$, the Kuhn-Tucker conditions are not solvable. If $x = y$ then we get that $x=y= \frac{e^p}{1+e^p}$. This tells us that at the social optimum, no players use the middle edge. For $p=0$ we get the social optimum of $x=y=\frac{1}{2}$ (the same solution as the case without traffic lights). Figure \ref{fig:scp} shows that the social cost functions of user equilibria and social optima are strictly increasing in $p$, so by choosing $p$ as low as possible we still reduce the costs of the best possible routing. 

      \begin{figure}[t]
          \centering
          \includegraphics[width=.49\textwidth, scale=0.7]{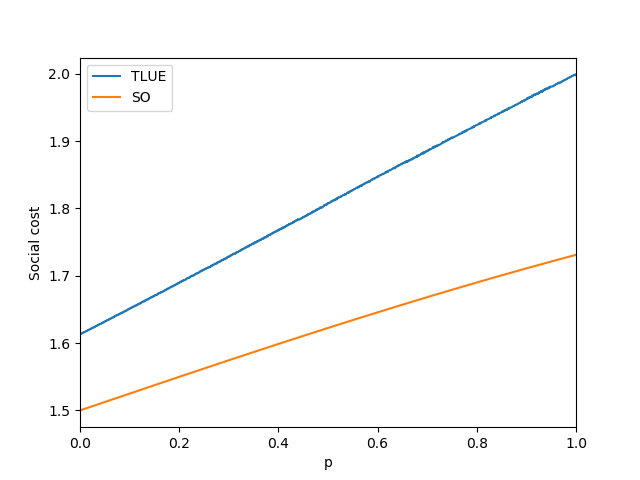}
          \caption{Social cost of TLUE and social optimum on the Wheastone network with a traffic light. The equilibrium costs are strictly less than the same network without a traffic light due to the altered routing behaviours. }
          \label{fig:scp}
     \end{figure}
     
     The TLUE never reaches the social optimum cost since the waiting time function is finite, so there are always some players using the middle edge. This property does not conincide  with the case when $p=0$, so again we see that there needs to be implemented some bounds on $p$ for realistic behaviours of the game: $0<p^-\leq p\leq p^+<1$.

     The UE cost of Braess' example where there exist no traffic lights (see Figure \ref{fig:Braess}) is 2. The traffic light is able to reduce the effects of selfish routing for all $p<1$. By including a traffic light, the cost of the TLUE is up to 24\% lower than without. Even with the additional constraint on $p$ to adhere to its bounds  $1-p^+<p<p^+$, the social cost is reduced by 22\%. This suggests that by positioning traffic lights at the end of edges which are not series-parallel, the costs of selfish routing are reduced. The price of anarchy belongs to $[1.07,1.15]$ for this game whereas once the traffic light and waiting time cost functions are removed the price of anarchy is, as well known (see e.g., \cite{rough}), strictly greater: $\frac{4}{3}$.

\section{Conclusions}\label{sec:conclusions}
    It is possible to model traffic lights in a routing game using expected waiting time functions where the equilibria of such games exist and are essentially unique. The proportion of red time at a traffic light influences routing choices. The waiting time functions behave non-monotonically past an upper bound value of the red light time in a cycle. It is therefore important to include a maximum red time length in intelligent traffic light models to minimise any unpredictable behaviour.
    
    By just altering the traffic light cycles, people's routing behaviour can change and alter the social cost of the equilibria significantly. Traffic lights can act as a central decision-maker to remove the effects of noncooperative selfish routing. By allowing for infinite waiting time functions, we can theoretically use traffic lights to make any network immune to Braess' paradox. Implementation of this is not compatible with the upper bounds for the proportion of red time in a traffic light cycle.

\bibliographystyle{named}
\bibliography{references}

\end{document}